\definecolor{lightgray}{gray}{0.9}
\let\oldReturn\Return
\renewcommand{\Return}{\State\oldReturn}
\newtheorem{theorem}{Theorem} 
\newtheorem{lemma}[theorem]{Lemma}
\definecolor{darkgreen}{rgb}{0,0.5,0}
\newif\ifuseboldmathops
\newif\ifuseittextabbrevs
\newcommand{\ie}{{\it i.e.}}
\newcommand{\ie}{i.e.}
\newcommand{\reals}{\mathbf{R}}
\newcommand{\reals}{\mathbb{R}}
\newcommand{\argmax}{\mathop{\mathrm{argmax}}}
\newcommand{\sink}{\mathsf{sink}}
\newcommand{\abs}[1]{\lvert#1\rvert}
\newcommand{\dist}[1]{\mathsf{Dist}(#1)}
\DeclareMathOperator*{\optmins}{\textrm{min.}}
\DeclareMathOperator*{\optsts}{\textrm{s.t.}}
\theoremstyle{definition}
\newtheorem{definition}{Definition}
\newtheorem{problem}{Problem}
\newtheorem{remark}{Remark}
\newcommand{\types}{\mathcal{T}}
\acrodef{mdp}[MDP]{Markov decision process}
\acrodef{scltl}[scLTL]{syntactically co-safe LTL}
\acrodef{dfa}[DFA]{deterministic finite-state automaton}
\acrodef{ssp}[SSP]{stochastic Shortest Path}
\acrodef{ids}[IDS]{intrusion detection system}
\acrodef{os}[OS]{operation system}
\acrodef{cvss}[CVSS]{common vulnerability scoring system}
\acrodef{milp}[MILP]{mixed-integer linear program}
\acrodef{minlp}[MINLP]{mixed-integer nonlinear program}
\acrodef{mip}[MIP]{mixed integer programming}
\acrodef{kl}[KL]{Kullback--Leibler}
\acrodef{maxent}[MAXENT]{Maximum Entropy}
\acrodef{irl}[IRL]{Inverse Reinforcement Learning}
\acrodef{ssg}[SSG]{Stochastic Stackelberg Game}
\acrodef{mtd}[MTD]{moving target defense}
\acrodef{wcarm}[WCARM-SA]{Worst-Case Absolute Regret Minimization for  Sensor Allocation} 
\title{Optimizing Sensor Allocation against Attackers with Uncertain  Intentions: A Worst-Case Regret Minimization Approach}
\author{Haoxiang Ma\orcidlink{0000-0002-3823-385X}, \IEEEmembership{Graduate Student Member, IEEE}, Shuo Han\orcidlink{0000-0003-2204-6256
}, \IEEEmembership{Member, IEEE}, Charles A. Kamhoua\orcidlink{0000-0003-2169-5975}, and Jie Fu \orcidlink{0000-0002-4470-2827}, \IEEEmembership{Member, IEEE}
\thanks{This work was sponsored in part by the Army Research Office and was accomplished under Grant
Number W911NF-22-1-0034 and W911NF2210166 and in part by NSF under grant No. 2144113. }
\thanks{Distribution A: Approved for public release. Distribution is unlimited.}
\thanks{H. Ma and J. Fu   are with the Dept. of Electrical and Computer Engineering, University of Florida, Gainesville, Fl 32611 USA.
(e-mail: hma2, fujie@ufl.edu)}
\thanks{S. Han is with the Department of Electrical and Computer Engineering, University of Illinois, Chicago, IL 60607.
(e-mail: hanshuo@uic.edu)}
\thanks{C. Kamhoua is with U.S. Army
Research Laboratory. 
(e-mail: charles.a.kamhoua.civ@mail.mil)}
}
\begin{document}

\maketitle
\thispagestyle{empty}
\pagestyle{empty}
\begin{abstract}

This paper focuses on the optimal allocation  of multi-stage attacks with the uncertainty in  attacker's intention. We model the attack planning problem using a Markov decision process and characterize the uncertainty in the attacker's intention using a finite set of reward functions---each reward represents a type of attacker. Based on this modeling,
we employ the paradigm of the worst-case absolute regret minimization   from robust game theory and develop mixed-integer linear program (MILP) formulations for solving the worst-case regret minimizing sensor allocation strategies for two classes of attack-defend interactions: one where the defender and attacker engage in a zero-sum game and another where they engage in a non-zero-sum game. We demonstrate the effectiveness of our algorithm using a stochastic gridworld example. 
\end{abstract}

\begin{IEEEkeywords}
Markov process, Game theory, Optimization
\end{IEEEkeywords}
\section{Introduction}
\IEEEPARstart{W}ith the increasing   severity of cyber- and physical- attacks, developing effective proactive  defense aims  to enable early detection of attacks by strategically allocating sensors/intrusion detectors. However, this task is complicated by the fact that attackers often have varying objectives and intentions. 
This paper studies the design of a robust proactive sensor allocation, given the uncertainty in the objective or the intention of the attacker. Our approach is motivated by real-world cyber security incidents,  where defenders have limited monitoring resources and must deal with attackers with different objectives, ranging from using a botnet to interrupt services with a  DDoS attack, distributing malware to steal sensitive data,  or privilege escalation attacks (see\cite{verizon2022DataBreach2022}, a report on recent cyber security incidents).

We formulate the attack planning problem  as a \ac{mdp} and enable the defender to allocate intrusion detectors, called sensors in this context, to detect the presence of an attack. The sensor allocation modifies the transition function of the attack \ac{mdp}. Specifically, when a state is allocated with a sensor, 
it becomes a sink/absorbing state, as the attack terminates once a sensor state is reached. Therefore, the goal of designing an optimal sensor allocation is to modify the transition function of the attack \ac{mdp} such that the attacker's value can be minimized given the best response attack strategy.
The sensor allocation problem in attack graphs\cite{jha2002two} is closely related to \ac{ssg} \cite{tambe2011security}. In an \ac{ssg}, the defender/leader commits to a strategy first to protect a set of targets with limited resources, while the attacker/follower selects the best response attack strategy to the defender's strategy. 
Related to \ac{ssg} for sensor allocation, Li et al.~\cite{liSynthesisProactiveSensor2023} developed a \ac{milp} formulation for solving joint   allocation of detectors and stealthy sensors that minimizes the attacker's probability of success. Sengupta et al.~\cite{senguptaMovingTargetDefense2018a} modeled   the attacker-defender interaction using a normal-form game and proposed a mixed strategy for the defender to randomize intrusion detectors. Besides the security game, other   resource allocation problems have been extensively studied. These include  distributing preventive resources to contain the spreading process \cite{nowzariOptimalResourceAllocation2017} in a network; allocating sensors for maximizing  coverage \cite{marden2016role}. The main difference is that in game-theoretic resource allocation, the decision maker's objective function is a function of the allocated resource and the best response of the attacker, which can be influenced by how the sensors are allocated. 

Traditional Stackelberg security games assume that the defender knows the attacker's payoff function, which is often not the case. To address this, researchers have studied robust defense design from the perspective of robust optimization \cite{aghassi2006robust}.  In \cite{pita2010robust}, the authors considered robust Stackelberg equilibria in   normal form games where the uncertainty comes from multiple   $\epsilon$-optimal best responses from the follower. In \cite{tan2019optimal}, the authors   used an \ac{mdp} to model the attack planning problem and employed robust optimization to design a \ac{mtd} policy that is robust to a finite uncertainty set of attack strategies.    In \cite{kroer2018robust}, the authors introduced a robust Stackelberg equilibrium that  maximizes the leader's payoff given the worst-case realization of the follower's payoff   in a deterministic sequential game in which each player selects  a distribution over action sequences. The uncertainty  is assumed to lie within a bounded interval  on the follower's payoff. 

Similar to \cite{kroer2018robust}, we also investigate the problem of robust defense when the defender has  incomplete knowledge about the attacker's payoffs, modeled as a finite set of attacker's types. Each type is associated with a unique   reward function that describes the attack objective.
We consider robust sensor allocation with the solution 
of the worst-case absolute regret minimization \cite{poursoltani2022adjustable}. 

The proposed \ac{wcarm} solution informs a regret-averse defender to choose a strategy that  leads to a small regret once he realizes what would have been
the best decision if he knew the attacker's type. As shown in operations research, the \ac{wcarm} solutions are often less conservative solutions compared to those by robust optimization \cite{savageTheoryStatisticalDecision1951, poursoltani2022adjustable}.

Our contribution can be summarized as:

\begin{itemize}
    \item We develop  \ac{wcarm} methods to solve  robust sensor allocation problems in zero-sum and non-zero-sum attack-defend interactions with uncertainty in the attack intention, described by a finite set of possible attacker's reward functions. We demonstrate that the \ac{wcarm} can be formulated as \ac{milp} problems for both cases.
    \item We leverage the zero-sum property to develop efficient solution for the \ac{wcarm} for the zero-sum case. 
    \item We validate the effectiveness of our proposed approach through   experiments with attack motion planning problems in stochastic gridworld environments.
\end{itemize}

\section{Preliminaries and Problem Formulation}
\noindent \textbf{Notations} Let $\reals$ denote the set of real numbers and $\reals^n$ 
the set of real $n$-vectors. The vector of all ones is represented as $\mathbf{1}$. The notation $z_i$ refers to the $i$-th component of a vector $z \in \reals^{n}$ or to the $i$-th element of a sequence $z_1, z_2, \ldots$, which will be clarified by the context. The set of probability distributions over a finite set $Z$ is denoted as $\dist{Z}$.

We begin by presenting an attack \ac{mdp} that captures an attacker's planning problem.

\noindent \textbf{Attack Planning Problem} 
The attack planning problem   is modeled as an attack \ac{mdp} $
M = (S, A, P, \nu, \gamma, R),
$
where $S$ is a set of states (nodes in the attack graph) including a special absorbing/sink state $s_\sink$, $A$ is a set of attack actions,  $P: S \times A \rightarrow \dist{S}$ is a probabilistic transition function such that $P(s'|s, a)$ is the probability of reaching state $s'$ given action $a$ being taken at state $s$, $\nu \in \dist{S}$ is the initial state distribution, $\gamma \in (0,1]$ is a discount factor, and $R: S \times A \rightarrow \reals$ is the attacker's reward function such that $R(s, a)$ is the reward received by the attacker for taking action $a$ in state $s$.  The attacker's objective is to maximize the total discounted rewards in the attack \ac{mdp}. For concrete examples of attack graphs generated from network vulnerabilities,   readers are directed to \cite{jha2002two} and \cite{liSynthesisProactiveSensor2023}.



We   consider Markovian policies because it suffices to search in Markovian policies for an optimal policy in the attack \ac{mdp}\cite{puterman2014markov}.  Given a Markovian policy $\pi \colon S \to \dist{A}$, the attacker's value function  $V_2 ^{\pi}: S \rightarrow \reals$ is defined as
\[
V_2^{\pi}(s) = E_{\pi}[\sum\limits_{k = 0}^{\infty}\gamma^{k}R(s_k, \pi(s_k))|s_0 =s],
\] where $E_{\pi}$ is the expectation and $s_k$ is the $k$-th state in the Markov chain induced from the \ac{mdp} $M$ under the policy $\pi$, starting from state $s$. The attacker's value given the initial distribution $\mu$ is 
$ V_2^{\pi}(\mu) =  \sum_{s\in S} \mu(s)
V_2^{\pi}(s).$

\noindent \textbf{Defender's incomplete information} The defender knows the dynamics in the attack \ac{mdp}. However, the defender does not know the exact reward function of the attacker; rather, the defender is only aware that the attacker can fall into one attack type at any given time. Different attacker types only differ in their reward function in the attack MDP and share the same states, actions, transition function, initial distribution, and discount factor.
Specifically, let $\types =\{1,\ldots, N\}$ be the attacker's type space. Let $R_i:S\times A\rightarrow \reals$ be the reward function for attacker type $i$. 

\noindent \textbf{Defender's countermeasures} To detect an ongoing attack, 
the defender is capable of allocating sensors to a subset $U\subset S$ of states in the \ac{mdp} $M$. The attack will be terminated immediately once the attacker reaches a state monitored by the sensor (assuming the sensor's false negative rate is 0) \footnote{Please see the Appendix that describes how to extend the proposed method when the sensor's false negative rate is nonzero.}. 
However, the defender's sensor allocation is constrained. Specifically, we consider a sensor allocation as a Boolean vector $\vec{x}\in \{0,1\}^{\abs{S}}$. If $\vec{x}(s)=1$, then the state $s \in U$ is allocated with a sensor. A valid allocation $\vec{x}$ needs to satisfy $\vec{x}(s)=0$ for any $s\in S \setminus U$ because only states in $U$ can be monitored.  In addition, the number of sensors cannot exceed a given integer $k$, \ie, $\mathbf{1}^{\mathsf{T}}\vec{x} \le k$.


We state the problem informally as follows. 
\begin{problem}
\label{pro: Optimization}
In the attack planning modeled as the \ac{mdp} $M$ with uncertainty in the attacker's type,  how to robustly allocate limited sensors with respect to the defense objective?
\end{problem}

\section{Main Results}

First, it is observed  that a sensor allocation changes the transition function of the attack \ac{mdp} as follows.

\begin{definition}[Attack \ac{mdp} equipped with sensors]
\label{def:mdp-sensors}
Given a sensor allocation $\vec{x}$ and the original attack \ac{mdp} $ M =(S,A,P, \nu, \gamma, R)$,  the attack \ac{mdp} under $\vec{x}$ is the \ac{mdp} 
\[
M(\vec{x}) = (S,A, P^{\vec{x}}, \nu, \gamma, R),
\]
where $S,A,\nu, \gamma, R$ are identical to those in $M$, and $P^{\vec{x}}$ is defined as
\begin{equation*}
\label{eq:def_px}
P^{\vec{x}}(s'|s,a)=
\begin{cases}
	1, & \vec{x}(s)=1,\ s'=s_{\sink},\\
	0, & \vec{x}(s)=1,\ s'\neq s_{\sink},\\
	P(s'|s,a), & \vec{x}(s)=0.
\end{cases} 
\end{equation*}
\end{definition}

To allocate sensors with  uncertainty in the attacker's type, we employ a solution of robust game \cite{poursoltani2022adjustable,stoye2011statistical}, called \emph{worst-case absolute regret minimization}, which 
optimizes the performance of a decision variable, $\vec{x}$ in our context, with respect to the ``worst-case regret'' that might be experienced when comparing $\vec{x}$ to the best decision that should have been made given  the attacker's type $i$ is known. 
Next, we discuss two approaches to solve worst-case regret minimizing sensor allocations given \begin{inparaenum}[i)]
\item  \emph{Zero-sum attack-defend game}: For each attack type $i$,  the defender's value in $M(\vec{x})$ is the negation of the attacker $i$'s value.
\item \emph{Non-zero-sum attack-defend game}: 
Regardless of the attacker's type, the defender's value in $M(\vec{x})$ is defined by evaluating the attacker's strategy with respect to a defender's cost function $C:S\times A\rightarrow \reals$. The defender's goal is to minimize the total discounted costs respecting $C$ incurred by the attacker's strategy.
\end{inparaenum}

\subsection{Worst-case regret minimization in zero-sum game}
\label{sec: worst-case zero sum}

In the zero-sum case, it is first noted that the optimal sensor allocation $\vec{x}_i$ for attacker's type $i$ can be obtained by solving the following optimization problem:
\[
\vec{x}_i   =\arg \min_{\vec{x} \in \mathcal{X}}\max_{\pi} V_{2, i}^{\pi}(\nu; \vec{x}).
\]

where $V_{2, i}^{\pi}(\nu; \vec{x})$ is attacker $i$'s value given attack strategy $\pi$ in the \ac{mdp} $M(\vec{x})$ and the attacker $i$'s reward $R_i$.
The optimal sensor allocation problem can be formulated as an \ac{milp} (see \cite{liSynthesisProactiveSensor2023}). 
As a result, the \ac{wcarm} problem  is formulated as:
\[
(\mbox{\ac{wcarm}}) \mbox{minimize}_{\vec{x}\in \mathcal{X}}\max_{i\in \types} \left(  V_{2, i}(\vec{x})  - V_{2, i}(\vec{x}_i) \right)
\]
 where $V_{2, i}(\vec{z})  = \max_\pi  V_{2, i}^{\pi}(\nu, \vec{z})$ for $z = \vec{x}, \vec{x}_i$. That is, the attacker always chooses the optimal strategy in \ac{mdp} $M(\vec{x})
$.  The difference $V_{2, i}(\vec{x})  - V_{2, i}(\vec{x}_i)$ measures the regret of the defender  for choosing $\vec{x}$ instead of $\vec{x}_i$ when the attacker is type $i$. The regret  is always non-negative for any sensor allocation decision $\vec{x} \in \mathcal{X}$ because $\vec{x}_i =\arg\min_{\vec{x}} V_{2, i}(\vec{x})$.

 Because $\vec{x}_i$ is pre-computed for each attacker type, the quantity $V_{2, i}(\Vec{x_i})$ is a constant, denoted by $v_i$ for clarity. The optimization problem is then written as:

\[ \quad \mbox{minimize}_{\vec{x}\in \mathcal{X}}\max_{i\in \types} \left( V_{2, i}(\vec{x})  - v_i  \right),
\]
which is a robust optimization problem. 
The following lemma shows how the robust optimization problem can be reformulated as an \ac{milp}.
\begin{lemma}

\label{lemma: zero-sum case lemma}
	The worst-case absolute regret minimization problem for robust sensor allocation in a zero-sum game is equivalent to the following optimization problem: 
%
%
  \vspace{-1ex}
  \begin{alignat}{2}
  \label{eq:zerosum-regret}
  	& \optmins_{ y, \vec{x} \in\mathcal{X}} && \quad y\\
  	& \optsts &&  y \ge   V_{2, i}(\vec{x}) - v_i, \forall i \in \types, \label{eq:regret-constraint} \\
  	& && V_{2, i}(\vec{x}) = \sum_{s\in S} \nu(s)V_{2, i}(s;\vec{x}), \forall i \in \types,  \label{eq:regret-constraint-2}\\ 
  	& && V_{2, i}(s;\vec{x}) \ge R_i(s,a ) + \gamma \sum_{s'} P^{\vec{x}}(s'|s, a) V_{2, i}(s'; \vec{x}), \nonumber \\
  	& && \forall s\in S, \forall a \in A, \forall i \in \types, \label{eq:regret-constraint-3}\\
   & && \mathbf{1}^{\mathsf{T}}\vec{x} \le k.\label{eq:robust_constraint_lp}
  \end{alignat}
\end{lemma}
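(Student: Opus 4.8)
The plan is to prove the equivalence in two stages: first an exact epigraph reformulation of the inner maximization over attacker types, and then a relaxation of each attacker's optimal value into Bellman supersolution inequalities whose tightness at the optimum is guaranteed by the separable structure of the problem.

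First I would introduce an auxiliary scalar $y$ and rewrite the WCARM objective $\min_{\vec x\in\mathcal X}\max_{i\in\types}\left(V_{2,i}(\vec x)-v_i\right)$ in epigraph form: minimize $y$ subject to $y\ge V_{2,i}(\vec x)-v_i$ for every $i\in\types$. This step is exact and standard, since $\max_{i}(V_{2,i}(\vec x)-v_i)$ is the smallest scalar dominating all terms $V_{2,i}(\vec x)-v_i$; it produces the objective~\eqref{eq:zerosum-regret} and constraint~\eqref{eq:regret-constraint}.

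The second and central step is to express $V_{2,i}(\vec x)$, the attacker-$i$ optimal value in $M(\vec x)$, through linear constraints. Here I would invoke the linear-programming characterization of optimal value functions for MDPs \cite{puterman2014markov,defariasLinearProgrammingApproach2003}: for a maximizing agent with reward $R_i$ and transitions $P^{\vec x}$, the optimal value function $V_{2,i}^\ast(\cdot;\vec x)$ is the pointwise smallest function satisfying the Bellman supersolution inequalities $V(s)\ge R_i(s,a)+\gamma\sum_{s'}P^{\vec x}(s'|s,a)V(s')$ for all $s\in S$, $a\in A$ --- precisely constraint~\eqref{eq:regret-constraint-3}. Consequently, any feasible assignment of the variables $\{V_{2,i}(s;\vec x)\}$ satisfies $V_{2,i}(s;\vec x)\ge V_{2,i}^\ast(s;\vec x)$ pointwise, so by \eqref{eq:regret-constraint-2} together with $\nu\in\dist{S}$ (hence $\nu\ge 0$) we obtain $\sum_s\nu(s)V_{2,i}(s;\vec x)\ge\sum_s\nu(s)V_{2,i}^\ast(s;\vec x)$, with equality attained by setting $V_{2,i}(\cdot;\vec x)=V_{2,i}^\ast(\cdot;\vec x)$. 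Note that $V_{2,i}(\vec x)=\sum_s\nu(s)V_{2,i}^\ast(s;\vec x)$ is the true WCARM quantity, since a single stationary policy is simultaneously optimal from every initial state in a discounted MDP.

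Combining the two steps, for each fixed $\vec x$ the minimum of $y$ over the remaining variables equals $\max_{i}\left(\sum_s\nu(s)V_{2,i}^\ast(s;\vec x)-v_i\right)$, which is exactly the WCARM objective at $\vec x$; minimizing over $\vec x\in\mathcal X$ subject to $\mathbf1^{\mathsf T}\vec x\le k$ then yields the claimed equivalence, including identical optimal allocations. The step I expect to be the main obstacle is justifying that relaxing the attacker's value to the supersolution inequalities remains tight at the optimum: because the inner maximization over types was replaced by the inequalities $y\ge V_{2,i}(\vec x)-v_i$, one must argue that driving $y$ downward forces each $V_{2,i}(\vec x)$ to its true optimal value rather than merely some feasible supersolution. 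This is where separability across types is essential --- each type $i$ owns a private copy of value variables constrained only by its own Bellman inequalities and appearing only in its own regret constraint, so the joint minimization decouples over $i$ and the individual per-type minima are simultaneously achievable, legitimizing the interchange of the outer minimization with the per-type value minimization.
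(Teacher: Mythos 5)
Your proof is correct and follows essentially the same route as the paper's: an epigraph reformulation of the max over types, combined with the linear-programming characterization of discounted-MDP value functions from \cite{defariasLinearProgrammingApproach2003}, under which any feasible $V_{2,i}$ is a pointwise upper bound on $V^\ast_{2,i}$ and equality at the optimum yields the WCARM value. Your tightness argument via per-type decoupling of the value variables is in fact slightly more explicit than the paper's, which instead tracks the regret-maximizing type $r(\vec{x})$ and argues that minimizing $y$ drives that type's bound tight, but the underlying reasoning is the same.
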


\begin{proof} 
	For attacker type $i$ and a sensor design $\vec{x}$, 
the optimal attacker's value vector, denoted 
$V^\ast_{2, i}(\vec{x})\in \reals^{\abs{S}}$ satisfies the Bellman optimality condition: For all $s\in S$, 
\[
V^\ast_{2,i}(s, \vec{x}) =\max_{a\in A} \left(R_i(s,a) + \gamma \sum_{s'\in S } P^{\vec{x}}(s'|s, a) V^\ast_{2,i}(s'; \vec{x})\right).
\]
\vspace{-1ex}

Based on the linear program formulation of dynamic programming \cite{defariasLinearProgrammingApproach2003}, any  vector $V_{2,i}(\vec{x})$ satisfying the set of constraints  in \eqref{eq:regret-constraint-3} is an upper bound on the $V^\ast_{2,i} (  \vec{x})$,  for all $s\in S$ element-wise. Therefore, $\sum_{s\in S} \nu(s) V_{2,i}(s;\vec{x}) \ge \sum_{s\in S} \nu(s) V^\ast_{2,i}(s; \vec{x})$. Constraints  \eqref{eq:regret-constraint},  \eqref{eq:regret-constraint-2}, and \eqref{eq:regret-constraint-3} together enforce $y \ge \max_{i \in \types} (\sum_{s\in S} \nu(s) V_{2, i}(s; \vec{x})  - v_i ) \ge  \max_{i \in \types} ( V^\ast_{2, i}(\nu;\vec{x}) - v_i)$.

For an arbitrary $\vec{x}$,  let $r(\vec{x}) =\argmax_{i\in \types} ( V^\ast_{2,i}(\nu;\vec{x}) - v_i)$,
that is, the $r(\vec{x})$ is the attacker type for which the defender's regret of using $\vec{x}$ is the largest among the regret for all attacker's types.
Then we have $y \ge \max_{i\in \types} ( V^\ast_{2,i}(\nu;\vec{x}) - v_i)  =  V^\ast_{r(\vec{x})}(\nu;\vec{x}) - v_{r(\vec{x})}$.  Because $v_{r(\vec{x})}$ is a  constant once ${r(\vec{x})}$ is determined,   minimizing $y$ is equivalent to minimizing the upper bound  of $V_{r(\vec{x})}(\nu;\vec{x})$ and thus $y = V^\ast_{r(\vec{x})}(\nu;\vec{x})-v_{r(\vec{x})}$. The optimization problem is then $\min_{\vec{x}} (V^\ast_{r(\vec{x})}(\nu;\vec{x})-v_{r(\vec{x})})$, which is  equivalent to the \ac{wcarm} formulation.
\end{proof}

It is noted that the optimization problem in \eqref{eq:zerosum-regret} is nonlinear because the transition function $P^{\vec{x}}$ depends on the decision variable $\vec{x}$ and the constraints in \eqref{eq:regret-constraint-3} 
include   product terms between the transition probabilities with the other  variable $V_{2,i}(s,\vec{x})$. We show how to transform the nonlinear program into an \ac{milp} next.

By the definition of $P^{\vec{x}}$ in~\eqref{eq:def_px},  the term $\sum_{s'}P^{\vec{x}}(s'|s,a)V_{2,i}(s';\vec{x})$ in~\eqref{eq:robust_constraint_lp} satisfies
  \vspace{-1ex}
\begin{align*}
 & \sum_{s'}P^{\vec{x}}(s'|s,a)V_{2,i}(s';\vec{x})\\
& \quad=\begin{cases}
V_{2,i}(s_{\sink};\vec{x}), & \vec{x}(s)=1,\\
\sum_{s'}P(s'|s,a)V_{2,i}(s';\vec{x}), & \vec{x}(s)=0
\end{cases}\\
& \quad=\sum_{s'}P(s'|s,a)V_{2,i}(s';\vec{x})(1-\vec{x}(s))+V_2(s_{\sink};\vec{x})\vec{x}(s)\\
& \quad=\sum_{s'}P(s'|s,a)V_{2,i}(s';\vec{x})(1-\vec{x}(s)),
\end{align*}
  \vspace{-1ex}
where the last equality is implied by $V_2(s_{\sink};\vec{x}) = 0$. Define
\begin{equation}
\label{eq:def_W}
\resizebox{0.91\hsize}{!}{%
$W_{2,i}(s,s')=V_{2,i}(s';\vec{x})(1-\vec{x}(s))  =\begin{cases}
V_{2,i}(s';\vec{x}), & \vec{x}(s)=0,\\
0, & \vec{x}(s)=1.
\end{cases}$
}
\end{equation}
Using the big-M method~\cite{griva2009linear}, Eq.~\eqref{eq:def_W} can be expressed equivalently as affine inequalities (in $\vec{x}$, $V_{2,i}$, and $W_{2,i}$)
  \vspace{-1ex}
\begin{align}
& W_{2,i}(s, s') \leq M(1-\vec{x}(s)), \label{eq:constraint2}  \\
& W_{2,i}(s, s') \geq m(1 - \vec{x}(s)), \label{eq:constraint3} \\
& W_{2,i}(s, s') - V_{2,i}(s' ;\vec{x}) \leq M\vec{x}(s), \label{eq:constraint4} \\
& W_{2,i}(s, s') - V_{2,i}(s' ;\vec{x}) \geq m\vec{x}(s). \label{eq:constraint5}
\end{align}

with proper choices of constants $M>0$ and $m<0$. For example, let $M$ be the upper bound on the total rewards and $m$ be the negation of the upper bound on the total rewards.






 

\subsection{Worst-case regret minimization in non-zero-sum game}
\label{sec: optimal against one general sum}
Next, we consider the scenario where the defender aims to minimize her cost function  $C: S\times A \rightarrow \reals$, which maps a state and an attack action to a cost penalty that measures the loss incurred by the attacker's action. Because the penalty is not necessarily the negation of the attack reward, the attack-defend game is non-zero-sum. 
For a single type of attacker,
we formulate the problem as a Stackelberg game as follows.

\begin{definition}
\label{def:ssg}
For a single type of attacker whose attack \ac{mdp} is  $M=(S,A, P, \mu, \gamma, R )$, and the defender's capability of allocating sensors,  an  \ac{ssg} is formulated as a tuple $G = (S, A_1, A_2, \mathcal{P}, \nu, \gamma, R_1, R_2),$ where 
\begin{itemize}
    \item $S$, $\nu$, $\gamma$ are the same components in  the attack \ac{mdp} $M$ for states, initial distribution, and discount factor.
    \item $A_1= \{0,1\}$
 is the defender/leader's action set. Action $0$ for not allocating a sensor, 1 for allocating a sensor.
 \item $A_2 = A$ is the attacker/follower's actions.
 \item $\mathcal{P}(s'|s, a_1, a_2)$ is the probability of reaching state $s'$ given action $a_1,a_2$ being taken by the defender and the attacker at state $s$. For a state $s\in S$, a defender's action $a_1\in A_1$ and an attacker's action $a_2\in A_2$, let
\begin{equation*}
\label{eq:def_px}
\mathcal{P}(s'|s,a_1, a_2)=
\begin{cases}
	1, & a_1 = 1,\ s'=s_{\sink},\\
	0, & a_1 = 1,\ s'\neq s_{\sink},\\
	P(s'|s, a_2), & a_1 = 0.
\end{cases} 
\end{equation*}
 \item $R_1: S \times A_1 \times A_2 \rightarrow \reals$ (resp. $R_2: S \times A_1 \times A_2 \rightarrow \reals$) is the leader(resp. follower)'s reward function,  defined with the defender's cost $C$ (resp. the attacker's reward $R$): \begin{equation*}
\label{eq:def_px}
R_1(s, a_1, a_2)=
\begin{cases}
	0, & a_1=1,\\
	- C(s, a_2), & a_1=0,\\
\end{cases} 
\end{equation*}

\begin{equation*}
\label{eq:def_px}
R_2(s, a_1, a_2)=
\begin{cases}
	0, & a_1=1,\\
	R(s, a_2), & a_1=0.\\
\end{cases} 
\end{equation*}
  \end{itemize} 
  \end{definition}
In this \ac{ssg}, the defender/leader decides the sensor allocation, which determines the transition function $\mathcal{P}$. The attacker/follower decides on the best response to maximize his reward. The reward function is understood as follows: When the defender allocates a sensor to state $s$ (i.e., $a_1=1$), the attack terminates in that state, and no further costs/rewards will be incurred for either player. Otherwise (i.e., $a_1=0$), the attacker continues to reach the next state with action $a_2$, and both the defender and the attacker receive a reward of $-C(s,a_2)$ and $R(s,a_2)$, respectively.
In this formulated \ac{ssg}, both the defender and the attacker aim to maximize their respective total discounted rewards.



Since    sensors cannot be moved once allocated, we restrict the defender's strategy to be deterministic and memoryless. Given a fixed sensor allocation, the best response attack strategy can also be deterministic.
For a strategy profile $(\pi_1,\pi_2)$--a tuple of the defender's strategy $\pi_1: S\rightarrow A_1$ and $\pi_2: S\rightarrow A_2$, the defender's value function $V_1$ is defined as 
\begin{align*}
\resizebox{1.0 \linewidth}{!} 
{$V_1(s; \pi_1, \pi_2) = E_{(\pi_1,\pi_2)}[\sum\limits_{k = 0}^{\infty}\gamma^{k}R_1(s_k, \pi_1(s_k), \pi_2(s_k))|s_0 =s]$, }
\end{align*}
%
%
%
%
where the expectation is taken in the Markov chain induced from $G$ given the strategy profile $(\pi_1,\pi_2)$.

For the non-zero-sum case,      the \ac{wcarm} problem takes the following form:
\[
(\mbox{\ac{wcarm}}) \mbox{minimize}_{\pi_1}\max_{i\in \types} \left(  V_{1, i}(\nu, \pi_{1}^{i}) - V_{1, i}(\nu, \pi_1) \right),
\] 
where $  V_{1, i}( \nu, \pi_1^i) $
is the   defender's value given both the defender and the attacker committing to the Stackelberg equilibrium in the \ac{ssg} $G$ (Def.~\ref{def:ssg}) where the attacker's reward $R_2$ is defined based on the reward function of attack type $i$. The Stackelberg equilibrium can be solved with methods in \cite{vorobeychik2012computing},
with a  modification that constrains the defender's strategy to be deterministic. The regret $ V_{1, i}(\nu, \pi_{1}^{i}) - V_{1, i}(\nu, \pi_1)$ measures the defender's regret in using strategy $\pi_1$ against attacker $i$ to the defender's best strategy $\pi_1^i$ that should have been employed when playing against attacker $i$.

To find $\pi_1^\ast$ that minimizes the worst-case regret, we introduce  a decision variable $y$ and rewrite the optimization problem as follows:
\vspace{-1ex}
\begin{alignat}{3}
	& \optmins_{ \pi_1 } && \quad  y\\
	& \optsts &&  y \ge   V_{1, i}(\nu, \pi_1^i) - V_{1, i}(\nu, \pi_1), \forall i \in \types ,
	 \label{eq:best-response-i}
\end{alignat}
where $V_{1,i}(\nu, \pi_1^i)$ is a constant and is denoted by $\bar v_{1, i}$. 

We extend the \ac{milp} formulation in \cite{vorobeychik2012computing} and get the following \ac{minlp} formulation to solve the \ac{wcarm} problem:
\vspace{-1ex}
\begin{subequations}
\label{eq: non-zero-sum-constraints}
\begin{align}
    &\min\limits_{\pi_1,  \{V_{1, i}, \pi_2^i, V_{2, i}\mid i\in \mathcal{T}\}}  y \\
    &\text{subject to: } \nonumber \\
    \label{constraint: defender regret}
    & y \ge  \bar v_{1, i}- \sum_{s \in S}\nu(s)V_{1, i}(s),\; \; \forall i \in \types, \\
    \label{constraint: defender policy 1}
    &\pi_1(a_1 \mid s) \in \{0, 1\},  \; \forall s \in S, a_1 \in \{0, 1\}, \\ 
    \label{constraint: defender policy 2}
    &\sum\limits_{a_1 \in \{0, 1\}} \pi_1(a_1 \mid s) = 1 \; \; \forall s \in S,\\
    \label{constraint: attacker policy 1}
    &\pi_2^i(a_2 \mid s) \in \{0,1\}, \; \; \forall i \in \types, s \in S, a_2 \in A, \\ 
    \label{constraint: attacker policy 2}
    &\sum\limits_{a_2 \in A} \pi_2^i(a_2 \mid s) = 1, \; \; \forall i \in \types, s \in S, \\
    \label{constraint: sensor number}
    & \sum_{s \in S} \pi_1(1 | s) \le k, \\
    & \text{The following constraints hold $\forall$ $ i \in \types, s \in S, a_2 \in A_2$:} \nonumber \\
    \label{constraint: worst-case regret defender}
    &(\pi^i_2(a_2 \mid s) - 1)Z \le V_{1, i}(s) - \tilde R_1^i(s, \pi_1, a_2) \nonumber \\
    & \le (1-\pi^i_2(a_2 \mid s))Z, \\
    \label{constraint: worst-case regret attacker}
    &0 \le V_{2, i}(s) - \tilde R_2^i(s, \pi_1, a_2) \le (1-\pi^i_2(a_2 \mid s))Z,
\end{align}
\end{subequations}

where 
  $\tilde R_1^i(s,\pi_1, a_2) = \sum_{a_1 \in \{0, 1\}}\pi_1(a_1|s)(R_1(s, a_1, a_2) + \gamma \sum_{s'}\mathcal{P}(s'|s, a_1, a_2)V_{1,i}(s'))$ represents the defender's expected value from state $s$ given the sensor allocation $\pi_1$ and attacker's action $a_2$ from state $s$, and the function $\tilde R_2^i(s, \pi_1,a_2)$ is defined for the attacker analogously by substituting the defender's reward and value of the next state with the attacker's. $Z$ is a large constant number, which can be the upper bound on the absolute value of total rewards.

  The constraints in \eqref{eq: non-zero-sum-constraints} are explained as follows: 
Constraint~\eqref{constraint: defender regret} enforces $y$ to be the worst-case regret. Constraint~\eqref{constraint: defender policy 1} and \eqref{constraint: defender policy 2} enforce the defender takes a deterministic strategy, constraint~\eqref{constraint: attacker policy 1}, \eqref{constraint: attacker policy 2} enforce attacker takes a deterministic strategy. Constraint~\eqref{constraint: sensor number} enforces the defender can not allocate more than $k$ sensors. Constraint~\eqref{constraint: worst-case regret defender} enforces that when the attacker $i$ takes action $a_2$, the defender's value  at that state $V_{1,i}(s)$ should equal the expected value $\tilde R_1^i(s, \pi_1,a_2)$ of defender against attacker $i$ who takes action $a_2$ at state $s$. When the attacker does not take action $a_2$, the constraint is non-binding. This set of constraints obtain $V_{1,i}(s)$ by evaluating strategy $\pi_1$ at the state $s$ against the attacker's best response for $\pi_1$.
Constraint~\eqref{constraint: worst-case regret attacker} enforces that when the attacker $i$ takes action $a_2$, his value $V_{2,i}(s)$ should be   the same as his expected value $\tilde R_2^i(s, \pi_1,a_2)$ given that action $a_2$. When $a_2$ is not taken at $s$, the attacker's value $V_{2,i}(s)$  should be greater than the expected value $\tilde R_2^i(s, \pi_1,a_2)$  due to the fact that $V_{2,i}(s)  = \max_{a_2}\tilde R_2^i(s, \pi_1,a_2)$. By enforcing this constraint, we can ensure $V_{2,i}$ is the attacker's value  given the defender's strategy $\pi_1$ and the attacker's best response to $\pi_1$.

 

 \begin{lemma}
	The worst-case absolute regret minimization solution for robust sensor allocation in the non-zero-sum attack-defender game is equivalent to the solution of ~\eqref{eq: non-zero-sum-constraints}.
\end{lemma}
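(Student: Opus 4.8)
The plan is to mirror the equivalence argument of Lemma~\ref{lemma: zero-sum case lemma}: I would show that the feasible region of the \ac{minlp}~\eqref{eq: non-zero-sum-constraints} encodes, separately for each attacker type $i\in\types$, (i) the attacker's best response in the \ac{ssg} $G$ of Def.~\ref{def:ssg}, (ii) the defender's induced value $V_{1,i}$ under the resulting strategy profile, and (iii) an epigraph reformulation of the inner maximum over types. Granting these three facts, the objective $\min y$ becomes exactly $\min_{\pi_1}\max_{i\in\types}(\bar v_{1,i}-V_{1,i}(\nu,\pi_1))$, which is the \ac{wcarm} objective, since $\bar v_{1,i}=V_{1,i}(\nu,\pi_1^i)$ is a precomputed constant. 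Throughout, restricting $\pi_1$ and each $\pi_2^i$ to deterministic memoryless policies via \eqref{constraint: defender policy 1}--\eqref{constraint: attacker policy 2} loses no generality, because an optimal deterministic best response exists in the attack \ac{mdp} $M(\pi_1)$.

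First I would establish that \eqref{constraint: worst-case regret attacker} forces $\pi_2^i$ to be the attacker's best response and $V_{2,i}$ to be his optimal value. The left inequality $V_{2,i}(s)\ge\tilde R_2^i(s,\pi_1,a_2)$, required for every $a_2\in A_2$, yields $V_{2,i}(s)\ge\max_{a_2}\tilde R_2^i(s,\pi_1,a_2)$. The right inequality, combined with the fact that $\pi_2^i(\cdot\mid s)$ is a $\{0,1\}$ indicator summing to one, becomes binding only for the selected action $a_2^\ast$ with $\pi_2^i(a_2^\ast\mid s)=1$ and then forces $V_{2,i}(s)\le\tilde R_2^i(s,\pi_1,a_2^\ast)$, while for every other action the big-$M$ slack $Z$ makes it vacuous. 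Combining the two gives $V_{2,i}(s)=\max_{a_2}\tilde R_2^i(s,\pi_1,a_2)$ and $a_2^\ast\in\argmax_{a_2}\tilde R_2^i(s,\pi_1,a_2)$, i.e. the Bellman optimality equation for the follower, so $\pi_2^i$ is a deterministic best response to $\pi_1$ in $M(\pi_1)$ under reward $R_i$.

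Next I would show that \eqref{constraint: worst-case regret defender} pins down $V_{1,i}$ as the defender's evaluation under the profile $(\pi_1,\pi_2^i)$. For the selected action $a_2^\ast$ at state $s$ the two-sided inequality collapses to the equality $V_{1,i}(s)=\tilde R_1^i(s,\pi_1,a_2^\ast)$, which is precisely the policy-evaluation Bellman equation for the leader given the follower's response, while for non-selected actions the constraint is slack. Hence $V_{1,i}(s)=V_{1,i}(s;\pi_1,\pi_2^i)$ for all $s$, so $\sum_{s}\nu(s)V_{1,i}(s)=V_{1,i}(\nu,\pi_1)$ is exactly the defender's value. Substituting into \eqref{constraint: defender regret} gives $y\ge\bar v_{1,i}-V_{1,i}(\nu,\pi_1)$ for every $i$, and minimizing $y$ enforces $y=\max_{i\in\types}(\bar v_{1,i}-V_{1,i}(\nu,\pi_1))$, completing the reduction to the \ac{wcarm} problem.

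The part I expect to be the main obstacle is handling non-unique attacker best responses, i.e. the Stackelberg tie-breaking. When $\argmax_{a_2}\tilde R_2^i(s,\pi_1,a_2)$ is not a singleton, the constraints only force $\pi_2^i$ to select \emph{some} best response, leaving $V_{1,i}$ (and thus the regret for type $i$) underdetermined; since decreasing $y$ requires increasing every $V_{1,i}$, the minimization favors the best response most advantageous to the defender. I would argue this coincides with the optimistic (strong) Stackelberg convention under which $\bar v_{1,i}=V_{1,i}(\nu,\pi_1^i)$ is computed in \cite{vorobeychik2012computing}, so that both the benchmark value $\bar v_{1,i}$ and the in-program value $V_{1,i}(\nu,\pi_1)$ use the same tie-breaking rule and the regret stays well-defined and non-negative. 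Finally I would remark that, unlike the zero-sum case, the bilinear products $\pi_1(a_1\mid s)\,V_{1,i}(s')$ appearing in $\tilde R_1^i$ and $\tilde R_2^i$ are retained rather than linearized, which is why \eqref{eq: non-zero-sum-constraints} is stated as an \ac{minlp}; a McCormick-type linearization would recover a \ac{milp} but is not needed for the equivalence claimed here.
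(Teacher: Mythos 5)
Your proof is correct, and it supplies an argument that the paper itself omits: the paper's entire proof of this lemma is a one-line deferral to Lemma~\ref{lemma: zero-sum case lemma}. It is worth pointing out that your route is not literally the same as that of Lemma~\ref{lemma: zero-sum case lemma}, and yours is the more apt one for this formulation. Lemma~\ref{lemma: zero-sum case lemma} rests on one-sided inequality constraints from the linear-programming formulation of dynamic programming, so the recorded values are merely upper bounds on the attacker's optimal values, driven tight by the outer minimization; that argument does not transfer directly to \eqref{eq: non-zero-sum-constraints}, because in the non-zero-sum case the defender's value must be \emph{evaluated}, not bounded, under the attacker's best response---a one-sided relaxation of $V_{1,i}$ would let the solver inflate $V_{1,i}(\nu)$ and understate the regret. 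Your argument instead uses the two-sided big-$Z$ constraints \eqref{constraint: worst-case regret defender}--\eqref{constraint: worst-case regret attacker} to pin $V_{2,i}$ to the follower's Bellman-optimality fixed point and $V_{1,i}$ to the policy-evaluation fixed point under the profile $(\pi_1,\pi_2^i)$, both unique when $\gamma<1$, which combined with the epigraph step on $y$ gives the equivalence in both directions (every feasible $\pi_1$ with a best response extends to a feasible point, and every feasible point's $y$ dominates the regret of its $\pi_1$). Your handling of non-unique best responses is a genuine addition rather than pedantry: since $\pi_2^i$ is a decision variable, minimizing $y$ selects the tie-breaking most favorable to the defender for the binding type, i.e., the optimistic (strong Stackelberg) convention, and your observation that the benchmark $\bar v_{1,i}$ computed via \cite{vorobeychik2012computing} uses the same convention is exactly what keeps the regret well-defined and non-negative---a point the paper passes over silently. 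Your closing remark that the bilinear terms are retained (MINLP) and that the McCormick step is a subsequent reformulation irrelevant to the equivalence matches the paper's structure. The only caveat, which you share with the paper, is that the fixed-point uniqueness invoked in both directions requires $\gamma<1$, whereas the model nominally allows $\gamma\in(0,1]$.
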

The proof is similar to that of Lemma~\ref{lemma: zero-sum case lemma} and can be found in the Appendix. 
 The key insight is that constraints~\eqref{constraint: defender regret}, \eqref{constraint: worst-case regret attacker} and\eqref{constraint: worst-case regret defender} together enforce $y \ge \max_{i \in \types} (\bar v_{1, i} - \sum_{s\in S} \nu(s) V_{1, i}(s))$ where $V_{1,i}(\cdot)$ is the defender's value given the best response of attacker type $i$ (enforced by \eqref{constraint: worst-case regret defender}).

 The above formulation is nonlinear due to the interaction between the integer variable $\pi_1$ and the continuous variable $V_{1, i}$ in $\tilde R_1^i$ ($\tilde R_2^i$ analogously). 
 But since the integer variable  is binary, we can use McCormick Relaxation\cite{mitsos2009mccormick} to reformulate \ac{minlp} into \ac{milp}.
 To do so, let's introduce new variables for the defender: for $i \in \types, s\in S, a_1\in \{0,1\}, a_2\in A_2$, define $w_{s, i}^{a_1, a_2} = \pi_1(a_1 | s)\sum_{s'}\mathcal{P}(s'|s, a_1, a_2)V_{1, i}(s')$. Analogously 
 let $z_{s, i}^{a_1, a_2}$ be defined for the attacker. 
 
 We then replace $\tilde R^i_1(s, \pi_1, a_2)$ in \eqref{constraint: worst-case regret defender} with $\sum_{a_1 \in \{0, 1\}}(\pi_1(a_1 | s)R_1(s, a_1, a_2)+\gamma w^{a_1, a_2}_{s, i})$ and add the following constraints: $\forall i \in \types, s\in S, a_1 \in \{0, 1\}, a_2 \in A$:
 \vspace{-3ex}
 \begin{subequations}
 	\label{straincomponent}
 	\begin{align}
 		w_{s, i}^{a_1,a_2} \geq \sum\limits_{s' \in S}\mathcal{P}(s'|s, a_1, a_2)V_{1, i}(s') - Z(1-\pi_1(a_1|s)), \nonumber \\
 		w_{s, i}^{a_1,a_2} \leq \sum\limits_{s' \in S}\mathcal{P}(s'|s, a_1, a_2)V_{1, i}(s') + Z(1-\pi_1(a_1|s)), \nonumber \\
 		-Z\cdot \pi_1(a_1|s) \leq w_{s, i}^{a_1,a_2} \leq Z \cdot \pi_1(a_1|s). \nonumber
 	\end{align}
 \end{subequations} 
We replace $\tilde R^i_2(s, \pi_1, a_2)$ in \eqref{constraint: worst-case regret attacker} and add constraints for $z^{a_1, a_2}_{s, i}$ analogously. 
For both zero-sum and non-zero-sum cases, the formulated \ac{milp} can be solved using the Gurobi Solver.

\begin{remark}
If different attackers have different transition functions $P_i$, then  their corresponding transition functions are used in   Constraints \eqref{eq:regret-constraint-3} for the zero-sum case and in  defining $\tilde  R^i_1(s, \pi_1, a_2)$ and $\tilde R^i_2(s, \pi_1, a_2)$ for the non-zero-sum case.
\end{remark}

\noindent \textbf{Complexity analysis}: Solving an \ac{milp} is NP-complete   and its runtime complexity depends on the number of constraints and integer variables. In the   non-zero-sum case, the number of integer variables required for the \ac{wcarm}  is $O(\abs{S}\times \abs{A_2} \times \abs{\mathcal{T}})$. The number of constraints is $O(\abs{S}\times \abs{A_1}\times \abs{A_2}\times \abs{\mathcal{T}})$. For the zero-sum case, the number of integer variables and constraints is $O(\abs{S})$ and $O(\abs{S} \times \abs{A_2}\times \abs{\mathcal{T}})$ respectively. While the \ac{wcarm} solution for the non-zero-sum case can be applied to the zero-sum case, the zero-sum case formulation   in Sec.~\ref{sec: worst-case zero sum} is more efficient.  
\section{Experiments}

\begin{figure}[ht]
 \vspace{-2ex}
	\centering
	\includegraphics[width = 0.5\linewidth]{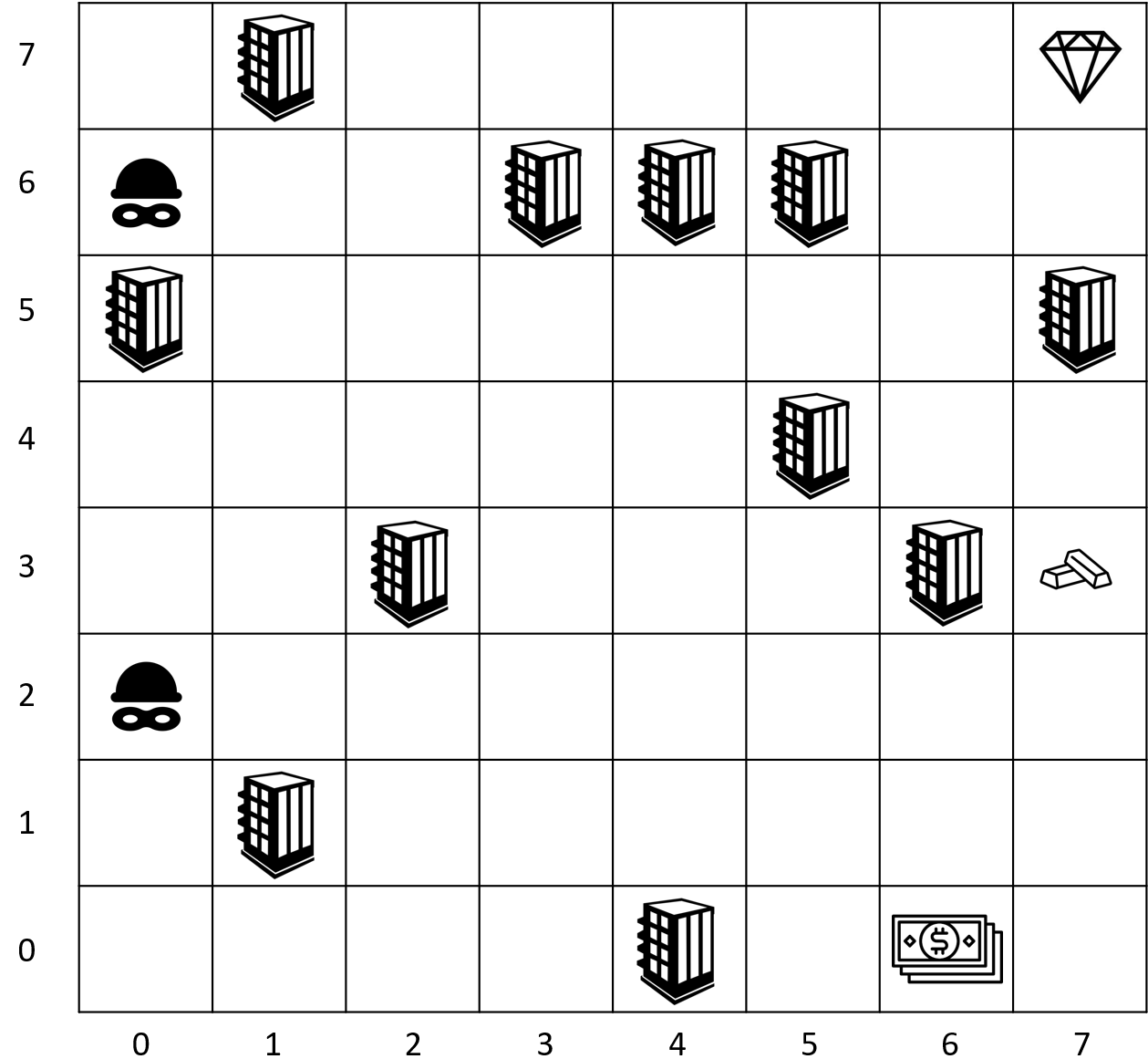}
	\caption{The 8 $\times$ 8 Gridworld Example.}
	\label{fig:8X8 gridworld}
 \vspace{-1ex}
\end{figure}

\begin{table}[ht]
\centering
 \vspace{-2ex}
\resizebox{\columnwidth}{!}{%
\begin{tabular}{|l|l|l|l|}
\hline
Reward/Cost  & Cash (0, 6) & Gold (3, 7) & Diamond (7, 7) \\ \hline
Attacker  1 & 15          & 12          & 12             \\ \hline
Attacker  2 & 12          & 15          & 15             \\ \hline
Defender        & 15          & 10          & 10            \\ \hline
\end{tabular}
}
\caption{Reward functions of two attackers and the cost function of the defender (for the non-zero-sum case).}
\label{tab: reward table}\vspace{-4ex}
\end{table}
We used an $8 \times 8$ gridworld environment,  depicted in Figure~\ref{fig:8X8 gridworld}, to demonstrate our solutions. A state is denoted by $(\mbox{row}, \mbox{col})$. There are two types of attackers with the same initial state distribution. Both attackers have a $70\%$ probability of starting from state $(2, 0)$ and a $30\%$ probability of starting from state $(6, 0)$. Each attacker can move in one of four compass directions. When given the action ``N'', the attacker enters the intended cell with a $1 - 2\alpha$ probability, and enters the neighboring cells, which are the west and east cells, with   probability $\alpha$. In our experiment, we set $\alpha = 0.1$.   If the attacker moves into the  buildings or the boundary, he remains in the previous cell. The environment contains three final states, each with a different value for the attackers. The attackers only receive a reward when they reach these final states. Table~\ref{tab: reward table} lists the rewards for the attackers.  
 
First, we consider the worst-case regret minimization in the zero-sum case. We change the number of sensors the defender can allocate to evaluate how sensor numbers affect the defender's value. As shown in Figure~\ref{fig: attacker's value},  the attacker's expected value decreases when the sensor number increases. 

In the case where only two sensors can be allocated, we compare the robust sensor allocation $\vec{x}^\ast$ with the optimal allocation strategies $\vec{x}_1$ and $\vec{x}_2$ against attacker types $1$ and $2$, respectively. When the defender knows the attacker's type, $\vec{x}_1$ yields a value of  {$5.96$} for attacker $1$ and $\vec{x}_2$ yields a value of {$6.56$}  for attacker $2$. However, when the defender does not know the attacker's type and implements $\vec{x}^\ast$, the values become {$6.12$ and $7.04$}  for attackers $1$ and $2$, respectively. In this case, the defender's worst-case absolute regret is {$\max(6.12 - 5.96, 7.04 - 6.56) = 0.48$}.

Using $\vec{x}_1$ against attacker $2$ results in a regret of $0.89$, while using $\vec{x}_2$ against attacker $1$ results in a regret of $1.49$. In both cases, the worst-case regret is higher than that of $\vec{x}^\ast$, indicating the advantage of the \ac{wcarm} method. 
The \ac{wcarm} sensor allocations for different numbers of sensors are listed in Table~\ref{tab: sensor allocation zero-sum} as well as the defender's worst-case regret under the robust policy. When 4 sensors are allocated, $\vec{x}_1=\vec{x}_2= \vec{x}^\ast = (1, 5),(2, 5),(5, 5),(7, 5)$  and the defender's worst-case regret $y=0$. An attacker is prevented from reaching any final state with probability $1$.

\vspace{-1ex}
\begin{table}[ht]
\centering
\resizebox{\columnwidth}{!}{%
\begin{tabular}{|l|l|l|l|l|}
\hline
$k$ & Policy $\vec{x}_1$ &   Policy $\vec{x}_2$ & Robust policy $\vec{x}^\ast$ &  $y$\\ \hline
1                 & $(1, 5)$                                   & $(2, 2)$                                   & $(2, 3)$     & $0.22$                                                                                                \\ \hline
2                 & $(1, 5),(2, 4)$                                 & $(2, 2),(6, 2)$                                 & $(1, 5)(6, 2)$       & $0.48$                                                                                            \\ \hline
3                 & $(1, 5),(2, 4),(6, 2)$                               & $(2, 3)(5, 2)(6, 2)$                               & $(1, 5),(2, 2)(6, 2)$    & $0.77$                                                                                             \\ \hline
\end{tabular}
}
 \vspace{-1ex}
\caption{Sensor allocation in the zero-sum case.  $\vec{x}_i$ is the optimal defender policy against attacker type $i\in \{1,2\}$.}
\label{tab: sensor allocation zero-sum}
\end{table}

\begin{figure}[ht]
 	\centering
  \vspace{-2ex}\includegraphics[width = 0.65\linewidth]{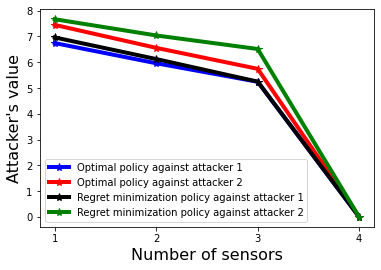}
 \vspace{-1ex}
	\caption{The attacker's expected value for zero-sum case.}
	\label{fig: attacker's value}   \vspace{-2ex}
\end{figure}


Moving on to the non-zero-sum case, the defender now receives a penalty when the attacker reaches the goal state, and the cost function for the defender is listed in Table~\ref{tab: reward table}. The sensor allocation strategies for the non-zero-sum cases, as well as the defender's worst-case regret under the robust policy, can be found in Table~\ref{tab: sensor allocation general-sum}.  
Noted that the worst-case regret is not indicative of the effectiveness of sensor allocations, that is, a small regret does not necessarily mean a large value for the defender. Thus, Figure~\ref{fig: defender's value} demonstrates that the defender's expected value increases as the number of sensors   increases.  
From Fig.~\ref{fig: defender's value}, it is observed that when two sensors are deployed, the worst-case regret is the largest  for both attackers.  

Similar to the zero-sum case, when 4 sensors can be allocated, $\vec{x}_1=\vec{x}_2= \vec{x}^\ast = (1, 5),(2, 5),(5, 5),(7, 5)$  and the defender's worst-case regret $y=0$.  

\begin{table}[ht]
\centering
\resizebox{\columnwidth}{!}{%
\begin{tabular}{|l|l|l|l|l|}
\hline
$k$ & Policy $\pi_1$ & Policy $\pi_2$ & Robust Policy $\pi^\ast$  & $y$\\ \hline
1                 & $(1, 5)$                                   & $(2, 2)$                                   & $(2, 3)$         & $0.26$                                                                                            \\ \hline
2                 & $(1, 5), (2, 4)$                                 & $(2, 3), (6, 2)$                                 & $(1, 5), (7, 5)$      & $1.19$                                                                                            \\ \hline
3                 & $(1, 4),(2, 5),(6, 2)$                               & $(2, 3),(5, 2),(6, 2)$                               & $(1, 5),(2, 2),(6, 2)$  & $0.20$                                                                                                 \\ \hline
\end{tabular}
}
\vspace{-1ex}
\caption{Sensor allocation in the non-zero-sum case.  $\vec{x}_i$ is the optimal defender policy against attacker type $i\in \{1,2\}$.}
\label{tab: sensor allocation general-sum}\vspace{-3ex}
\end{table}

\vspace{-2ex}
\begin{figure}[ht]
	\centering
	\includegraphics[width = 0.65\linewidth]{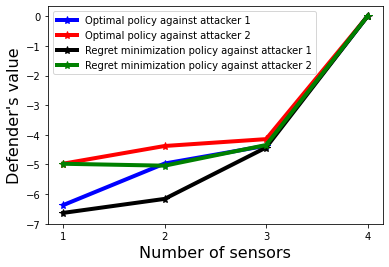}
  \vspace{-1ex}
	\caption{The defender's expected value.}
	\label{fig: defender's value} \vspace{-1ex}
\end{figure}

The experiments are conducted on a Windows 10 machine with Intel(R) i7-11700k CPU and $32$ GB RAM. The computation time for robust sensor allocations in the zero-sum cases is less than 1 second. For non-zero-sum cases, it takes from 77 sec to 3 hours  to solve given increasing sensor numbers.


\section{Conclusion}
We develop   robust sensor allocation methods in probabilistic attack planning problems using worst-case absolute regret minimization from robust game theory. We   demonstrated that both robust zero-sum and non-zero-sum sensor allocation problems can be formulated as \ac{milp}s. Our approach is suitable for a wide range of safety-critical scenarios that involve constructing probabilistic attack graphs from known network vulnerabilities.  Future work could focus on developing more efficient and approximate solutions for robust sensor allocations in non-zero-sum  games. Additionally, the solution concept for robust games can be extended to design moving target defenses that randomize network topologies and the integrated design of sensor allocation and moving target defense.





\bibliographystyle{plain}
\bibliography{refs.bib}
\section*{Appendix}
\subsection{Proof of Lemma 2}
For attacker type $i$ and a sensor design $\pi_1$, constraint~\eqref{constraint: worst-case regret attacker} restricts the optimal attacker's value $V_{2, i}$ satisfies the Bellman optimality condition, and $\pi_2^{i}$ is the corresponding optimal policy. Constraint~\eqref{constraint: worst-case regret defender} restricts the optimal defender's value $V_{1, i}$ to satisfy the Bellman optimality condition.
Constraints~\eqref{constraint: defender regret}, \eqref{constraint: worst-case regret attacker} and\eqref{constraint: worst-case regret defender} together enforces $y \ge \max_{i \in \types} (\bar v_{1, i} - \sum_{s\in S} \nu(s) V_{1, i}(s))$.

The rest of the reasoning follows the same from the zero-sum case and is repeated here for completeness: 
For an arbitrary $\pi_1$, let $r(\pi_1) = \argmax_{i\in \types} ( \bar v_{1, i} - \sum_{s\in S} \nu(s) V_{1, i}(s))$, that is the $r(\pi_1)$ is the attacker type for which the defender's regret of using sensor design $\pi_1$ is the largest among the regret for all attacker's types. Then we have $y \ge \max_{i\in \types} (\bar v_{1, i} - \sum_{s\in S} \nu(s) V_{1, i}(s))  =  \bar v_{1, r(\pi_1)} - \sum_{s\in S} \nu(s) V_{1, r(\pi_1)}(s)$. Because $\bar v_{1, r(\pi_1)}$ is a constant once $r(\pi_1)$ is determined. Minimizing $y$ is equivalent to maximizing the upperbound of $\sum_{s\in S} \nu(s) V_{1, r(\pi_1)}(s)$ and thus $y = \bar v_{1, r(\pi_1)} - \sum_{s\in S} \nu(s) V_{1, r(\pi_1)}(s)$. The optimization problem is then $\min_{\pi_1} (v_{1, r(\pi_1)} - \sum_{s\in S} \nu(s) V_{1, r(\pi_1)}(s))$.

\subsection{False positive/negative rate of sensors}
If the sensor has a non-zero false positive or false negative rate, then we can modify the transition probability $P$ to solve the problem. We first clarify that the false positive rate does not change the transition probability. This is because if the attack \ac{mdp} only captures how the attacker makes progress by taking attack actions, such as exploiting known network vulnerabilities. The normal user's actions are not considered and thus the transition function does not consider a false positive rate. In particular applications, false negatives pose greater risks to system security than false positives.

Therefore, we only need to consider the false negative rate, denoted by $\epsilon$. In the zero-sum game, the term  $\sum_{s'}P^{\vec{x}}(s'|s,a)V_{2,i}(s';\vec{x})$ in~\eqref{eq:robust_constraint_lp} satisfies
\resizebox{.98\linewidth}{!}{
\begin{minipage}{\linewidth}

\begin{align*}
 & \sum_{s'}P^{\vec{x}}(s'|s,a)V_{2,i}(s';\vec{x})\\
& =\begin{cases}
(1- \epsilon)V_{2,i}(s_{\sink};\vec{x}) + \epsilon \sum\limits_{s'}P(s'|s,a)V_{2,i}(s';\vec{x}), & \vec{x}(s)=1,\\
\sum\limits_{s'}P(s'|s,a)V_{2,i}(s';\vec{x}), & \vec{x}(s)=0
\end{cases}\\
&=\sum_{s'}P(s'|s,a)V_{2,i}(s';\vec{x})(1-\vec{x}(s)+\epsilon \vec{x}(s)),
\end{align*}
\end{minipage}
}
then follow the big-M method we used in the zero-sum game. Addressing the false-negative rate in the non-zero-sum game is straightforward: we simply modify the McCormick relaxation part by changing $\mathcal{P}(s'|s, 1, a_2)$ to
\begin{equation*}
\label{eq:falsenegative_px}
\mathcal{P}(s'|s, 1, a_2)=
\begin{cases}
	1-\epsilon, &\ s'=s_{\sink},\\
	\epsilon P(s'|s,a_2), & s'\neq s_{\sink}.
\end{cases} 
\end{equation*}


\subsection{Scalability analysis regarding the number of attackers}
The complexity analysis shows that the number of integer variables grows linearly in the number of attacker types. To evaluate how the method scales with respect to the number of attackers.  Let's consider the zero-sum case and the two attacker types mentioned in the main draft. Assuming that attacker 3's rewards upon reaching Cash, Gold, and Diamond are 15, 12, and 15 respectively, and attacker 4's rewards upon reaching these locations are 12, 12, and 15 respectively. When only attacker types 1 and 2 are present, the running time is 0.016 seconds. When potential attacker types are restricted to 1, 2, and 3, the running time increases to 0.997 seconds. If we include attacker type 4, the running time further increases to 0.875 seconds. Finally, when all four potential attacker types are considered, the running time is 1.130 seconds. From these experiments, we observe that the running time does not grow as a linear function of the attacker numbers. Even for the same number of attackers, the attacker's reward function also influences the running time. 
\end{document}